\def\input@path{{styles/}}
    \newcommand{\SODAVer}[1]{}
    \newcommand{\NotSODAVer}[1]{#1}
    \newcommand{\SODAVer}[1]{#1}
    \newcommand{\NotSODAVer}[1]{}
\newcommand{\SoCG}[1]{}
\newcommand{\NotSoCG}[1]{#1}%
\newcommand{\SoCG}[1]{#1}
\newcommand{\NotSoCG}[1]{}%
   \def\UseBibLatex{1}%
\providecommand{\BibLatexMode}[1]{}
\providecommand{\BibTexMode}[1]{}
\renewcommand{\BibLatexMode}[1]{#1}
\renewcommand{\BibTexMode}[1]{}
  \renewcommand{\BibLatexMode}[1]{}
  \renewcommand{\BibTexMode}[1]{#1}
\theoremstyle{plain}%
\newtheorem{theorem}{Theorem}[section]
\newtheorem{lemma}[theorem]{Lemma}
\newtheorem{corollary}[theorem]{Corollary}
\newtheorem{observation}[theorem]{Observation}
\theoremstyle{plain}%
\newtheorem*{remark:unnumbered}[theorem]{Remark}%
\newtheorem{remark}[theorem]{Remark}%
\newtheorem{definition}[theorem]{Definition}
\newtheorem{example}[theorem]{Example}
\newtheorem{problem}[theorem]{Problem}
\theoremstyle{nonumberplain}%
\newtheorem{proof}{Proof:}%
   \newtheorem{problem}[theorem]{Problem}
\providecommand{\emphind}[1]{}%
\renewcommand{\emphind}[1]{\emph{#1}\index{#1}}
\definecolor{blue25emph}{rgb}{0, 0, 11}
\providecommand{\emphic}[2]{}
\renewcommand{\emphic}[2]{\textcolor{blue25emph}{%
      \textbf{\emph{#1}}}\index{#2}}
\providecommand{\emphi}[1]{}%
\renewcommand{\emphi}[1]{\emphic{#1}{#1}}
\definecolor{almostblack}{rgb}{0, 0, 0.3}
\providecommand{\emphw}[1]{}%
\renewcommand{\emphw}[1]{{\textcolor{almostblack}{\emph{#1}}}}%
\providecommand{\emphOnly}[1]{}%
\renewcommand{\emphOnly}[1]{\emph{\textcolor{blue25}{\textbf{#1}}}}
\newcommand{\myqedsymbol}{\rule{2mm}{2mm}}
\newcommand{\SamThanks}[1]{%
   \thanks{%
      Siebel School of Computing and Data Science; %
      University of Illinois; %
      201 N. Goodwin Avenue; %
      Urbana, IL, 61801, USA; %
      \href{mailto:spam@illinois.edu}{samuelr6@illinois.edu}; %
      \url{https://surg.dev/}. %
   #1%
   }%
}
\newcommand{\SarielThanks}[1]{%
   \thanks{%
      Department of Computer Science; %
      University of Illinois; %
      201 N. Goodwin Avenue; %
      Urbana, IL, 61801, USA; %
      \href{mailto:spam@illinois.edu}{sariel@illinois.edu}; %
      \url{http://sarielhp.org/}.%
   #1%
   }%
}
\newcommand{\HLink}[2]{\hyperref[#2]{#1~\ref*{#2}}}
\newcommand{\HLinkY}[2]{\hyperref[#2]{#1}}
\newcommand{\HLinkSuffix}[3]{\hyperref[#2]{#1\ref*{#2}{#3}}}
\newcommand{\figlab}[1]{\label{fig:#1}}
\newcommand{\figref}[1]{\HLink{Figure}{fig:#1}}
\newcommand{\thmlab}[1]{{\label{theo:#1}}}
\newcommand{\thmref}[1]{\HLink{Theorem}{theo:#1}}
\newcommand{\seclab}[1]{\label{sec:#1}}
\newcommand{\secref}[1]{\HLink{Section}{sec:#1}}
\newcommand{\tablab}[1]{\label{table:#1}}%
\newcommand{\tabref}[1]{\HLink{Table}{table:#1}}%
\providecommand{\deflab}[1]{\label{def:#1}}
\providecommand{\exalab}[1]{\label{example:#1}}
\newcommand{\exaref}[1]{\HLink{Example}{example:#1}}
\newcommand{\lemlab}[1]{\label{lem:#1}}
\newcommand{\lemref}[1]{\HLink{Lemma}{lem:#1}}%
\providecommand{\eqlab}[1]{}%
\renewcommand{\eqlab}[1]{\label{equation:#1}}
\newcommand{\Eqref}[1]{\HLinkSuffix{Eq.~(}{equation:#1}{)}}
\providecommand{\remove}[1]{}%
\newcommand{\Set}[2]{\left\{ #1 \;\middle\vert\; #2 \right\}}
\newcommand{\pth}[1]{\mleft(#1\mright)}%
\newcommand{\cardin}[1]{\left\lvert {#1} \right\rvert}%
\renewcommand{\th}{th\xspace}
\renewcommand{\Re}{\mathbb{R}}%
\newlist{compactenumA}{enumerate}{5}%
\setlist[compactenumA]{topsep=0pt,itemsep=-1ex,partopsep=1ex,parsep=1ex,%
   label=(\Alph*)}%
\newlist{compactenuma}{enumerate}{5}%
\setlist[compactenuma]{topsep=0pt,itemsep=-1ex,partopsep=1ex,parsep=1ex,%
   label=(\alph*)}%
\newlist{compactenumI}{enumerate}{5}%
\setlist[compactenumI]{topsep=0pt,itemsep=-1ex,partopsep=1ex,parsep=1ex,%
   label=(\Roman*)}%
\newlist{compactenumi}{enumerate}{5}%
\setlist[compactenumi]{topsep=0pt,itemsep=-1ex,partopsep=1ex,parsep=1ex,%
   label=(\roman*)}%
\newlist{compactitem}{itemize}{5}%
\setlist[compactitem]{topsep=0pt,itemsep=-1ex,partopsep=1ex,parsep=1ex,%
   label=\ensuremath{\bullet}}%
   \numberwithin{figure}{section}%
   \numberwithin{table}{section}%
   \numberwithin{equation}{section}%
\DeclareFontFamily{U}{BOONDOX-calo}{\skewchar\font=45 }
\DeclareFontShape{U}{BOONDOX-calo}{m}{n}{
  <-> s*[1.05] BOONDOX-r-calo}{}
\DeclareFontShape{U}{BOONDOX-calo}{b}{n}{
  <-> s*[1.05] BOONDOX-b-calo}{}
\DeclareMathAlphabet{\mathcalb}{U}{BOONDOX-calo}{m}{n}
\SetMathAlphabet{\mathcalb}{bold}{U}{BOONDOX-calo}{b}{n}
\DeclareMathAlphabet{\mathbcalb}{U}{BOONDOX-calo}{b}{n}
\newcommand{\model}{\ensuremath{\mathcal{M}}\xspace}%
\newcommand{\areaX}[1]{\mathsf{m}\pth{#1}}%
\newcommand{\bd}{\partial}
\newcommand{\lenX}[1]{\left\| #1 \right\|}
\newcommand{\dMC}{\mathsf{d}_{\model}}%
\newcommand{\dMY}[2]{\dMC\pth{#1,#2}}%
\newcommand{\Expansion}{\tau^\star}
\newcommand{\interX}[1]{\mathrm{int}\pth{#1}}%
\newcommand{\salientY}[2]{\mathsf{s}(#1, #2)}%
\newcommand{\neck}{\mathcalb{o}}%
\newcommand{\disk}{\mathcalb{d}}%
\newcommand{\etal}{\textit{et~al.}\xspace}
\newcommand{\Hetroy}{H{\'{e}}troy\xspace}%
\providecommand{\ComplexityClass}[1]{{{\textcolor[named]{OliveGreen}{%
            \textsc{#1}}}}}
\providecommand{\NPHard}{{\ComplexityClass{NP-Hard}}%
   \index{NP!hard}\xspace}
\providecommand{\NPComplete}{\ComplexityClass{NP-Complete}%
   \index{NP!complete}\xspace}
\newcommand{\Term}[1]{\textsf{#1}}
\newcommand{\BFS}{\Term{BFS}\xspace}%
\newcommand{\MST}{\Term{MST}\xspace}%
\newcommand{\regX}[1]{\patch_{#1}}
\newcommand{\oregX}[1]{\opatch_{#1}}
\newcommand{\crA}{\xi}
\newcommand{\crB}{\varsigma}
\newcommand{\crD}{\eta}
\newcommand{\Path}{\kappa}
\newcommand{\GVE}{G=(V,E)}
\newcommand{\GLasso}{\varsigma}
\newcommand{\spathX}[1]{{\circlearrowright_{ \Path}}(#1)}
\newcommand{\neckC}{\mathcal{N}}%
\newcommand{\neckY}[2]{\mathcal{N}\pth{#1,#2}}%
\newcommand{\tightnessX}[1]{\left\langle {#1} \right\rangle}
\newcommand{\eps}{\varepsilon}%
\newcommand{\RP}{\mathcal{R}}
\newcommand{\patch}{\mathcalb{b}}%
\newcommand{\opatch}{\overline{\mathcalb{b}}}%
\newcommand{\patchM}{\patch_i}%
\newcommand{\collar}{collar\xspace}%
\newcommand{\optCollar}{\vartheta^\star}
   \newcommand{\mypar}[1]{%
      \subparagraph*{#1}%
   }
   \newcommand{\mypar}[1]{%
      \paragraph*{#1}%
   }
\begin{document}

\title{In the Search for Good Neck Cuts}

\SoCG{%
   \ccsdesc{Computing methodologies~Shape modeling}
   \keywords{Constrictions, Object Representations, Computer Graphics}
   \Copyright{Sariel Har-Peled Sam Ruggerio}
   \authorrunning{Har-Peled, Ruggerio}%
}

\NotSoCG{%
   \author{%
      Sariel Har-Peled%
      \SarielThanks%
      {Work on this paper was partially supported by NSF AF award CCF-2317241.%
      }%
      \and%
      Sam Ruggerio%
      \SamThanks%
      {Work on this paper was partially supported by NSF AF award CCF-2317241.%
      }%
   }%
}%

\maketitle

\begin{figure}[h!]
    \centerline{   \includegraphics[width=\textwidth]{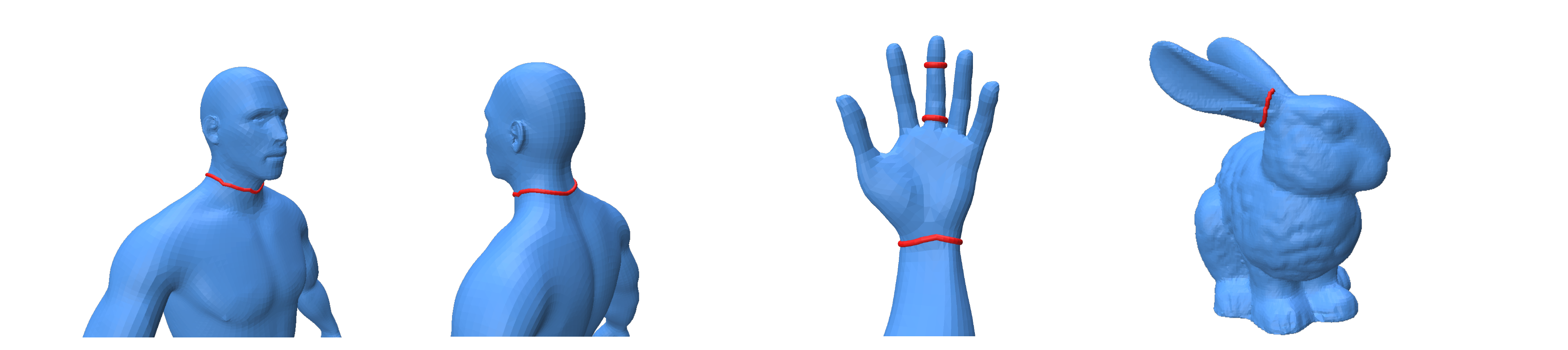}}
\end{figure}

\begin{abstract}
    We study the problem of finding neck-like features on a surface.  Applications for such cuts include robotics, mesh segmentation, and algorithmic applications. We provide a new definition for a surface bottleneck --- informally, it is the shortest cycle relative to the size of the areas it separates. Inspired by the isoperimetric inequality, we formally define such optimal cuts, study their properties, and present several algorithms inspired by these ideas that work surprisingly well in practice. For examples of our algorithms, see \url{https://neckcut.space}.
\end{abstract}

\section{Introduction}
\seclab{int}

Computing ``good'' cycles on surfaces is a well-studied problem \cite{eh-ocsid-04, cm-fsnsn-07, bgsf-rgsaa-08, cdem-fotc-10, dfw-echtl-13, ft-cls-13, cvl-fsntc-16, blr-fmsrg-25}, such as computing a class of cycles, such as shortest geodesic cycles, non-contractable loops, handles, etc. We are interested in cycles that represent neck-like features on a surface. Identifying neck-like features on a 3D surface mesh has been a crucial algorithmic problem in applications such as robotics, mesh segmentation, and more.  Neck-like cycles are often employed in intermediate steps within these applications, but computing them can be both challenging and time-consuming, as seen in \cite{zmt-fsptm-05, lf-isutp-07}. Existing methods tend to rely on expensive preprocessing for topological methods, which may also involve mesh modification, to produce neck-like cycles.

In this work, we consider two problems:
\begin{problem}
    What is the optimal notion of a neck-like surface, and the cycles to define these necks?
\end{problem}
\begin{problem}
    How can neck-like cycles be efficiently computed?
\end{problem}

We propose a new geometrically motivated definition of a \emphw{bottleneck curve} (or \textit{neck-cut}), based on the isoperimetric quantity. We describe a theoretical approximation algorithm to find near-optimal bottleneck curves, which runs in polynomial time. We then implemented a practical algorithm, with this motivating background, to run on real models, which runs in sub-quadratic time with good results. Our practical algorithm is simple to implement, relying only on shortest path algorithms and filtering to achieve the results shown, with no second-pass optimization or curve smoothing required.

\subsection{Background and prior work}

\mypar{Necks versus non-contractable loops.}

Finding neck-like features differs from finding the shortest non-contractable loops on a surface.  As a reminder, a non-contractable loop on a surface corresponds to a cycle that can not be morphed into a point. Naturally, a neck-like loop might lie on an object that is topologically a ball (as are most of the examples shown in figures throughout this paper) -- for example, on an hourglass, all the loops are contractable, yet it has a neck-like feature.  While in many high-genus objects, these non-contractable loops may act as neck-like curves, there may be other non-contractable loops that do not lie on a feature boundary, or contractable loops that are on neck-like features, which would not be considered. Finding non-contractible cycles can be used to reduce the genus of a surface, by cutting along them as shown in \cite{eh-ocsid-04}.

\mypar{Sparsest cut (Cheeger's constant).}

A natural approach to addressing this problem is to consider the model as a graph $\GVE$ and compute the sparsest cut.  That is, find a partition of the vertices of $G$ into two sets $S, \overline{S}$, such that the ratio
\begin{equation*}
    \phi(S,\overline{S})
    =
    \frac{\cardin{E(S,\overline{S})}}{\min( |S|, |\overline{S}| )}
\end{equation*}
is minimized. Sadly, the associated optimization problem is \NPHard, and instead one can use algebraic techniques to approximate it.  People use various heuristics inspired by this observation to find good cuts.  For example, Gotsman \cite{got-gp-03} noted the connection of the Cheeger constant to the Laplacian of a surface mesh. Using this, he was able to detect whether a small cut exists and how to partition the graph based on spectral embedding for genus-0 models.

However, the above definition of Cheeger's constant does not constrain that the cut is \textit{connected}. Additionally, transitioning to algebraic methods often results in fuzzy boundaries that require further refinement. Gotsman's work approximates the optimal Laplacian basis, as computing the optimal would be on the order of $O(|V|^2)$ time.

These techniques appear to yield relatively slow algorithms. Since they do not work directly with the geometry, the generated cuts, while of relatively good quality, are not quite locally optimal.

\mypar{Topological Methods.}

Abdelrahman and Tong \cite{at-fcnf-23} computed neck-like features on meshes by locating critical points in a volumetric mesh and generating cutting planes over the mesh to isolate these loops.  The main tool is identifying points that are $2$-saddles of a Morse function (generated by a distance function), as they are good seed locations for neck-like features. This work extended Feng and Tong \cite{ft-cls-13}, who evaluated the persistent homology of the mesh to locate neck-like loops.

Abdelrahman and Tong \cite{at-fcnf-23} achieve a speedup, compared to \cite{ft-cls-13}, by producing an initial neck loop from the cutting plane, which is smoothed out via shortest loop evaluation. The loops they generate are of good quality on all genus meshes. The main pitfall, however, is the need for a volumetric tetrahedral mesh, resulting in a substantial increase in vertex and face density. Other topological approaches, such as the one by Dey \etal \cite{dlsc-cghtl-08}, often have the same requirement of a volumetric mesh. Such a volumetric mesh can be substantially larger than the input 2d model, and computing them from an input surface is quite challenging in some cases.

\mypar{Surface Methods.}

\Hetroy and Attali \cite{h-c-05, ha-dccps-03, ha-cpgcc-03} compute geodesics on the surface, and slide to fit them to generate tight constrictions (neck-like features). Earlier works relied on mesh simplification to generate seed curves. However, in all cases, these algorithms rely on the local properties of geodesics to find neck loops.

Specifically, \Hetroy \cite{h-c-05} approximates the mean curvature of the mesh in all locations to find seed locations for constrictions. Then, the algorithm performs a local search from these seed locations until the constrictions are minimized, and smooths and minimizes the curve.  Xin \etal \cite{xhf-ecegl-12} designed a fast algorithm to find shortest, exact geodesics on a model, regardless of the quality of the input mesh. However, initial cutting loops must be specified in the input, as this algorithm does not discover loops from the input model directly.

Tierny \etal \cite{tvd-tdmse-06} use similar methods to our approach. However, when approximating constrictions, they use the concavity of the curve. This algorithm requires computation of the Discrete Gaussian Curvature \cite{mbsb-ddgo-03} of each point on a curve, requiring an $O(n^2)$ time algorithm to compute that metric.

\subsection{Our approach}

Our starting point is defining formally what constitutes a good neck cut. Intuitively, it is a curve that bounds a large area, while being short. In the plane, the largest area one can capture if the length of the perimeter is fixed is a disk. This innocent-looking observation is a consequence of the famous isoperimetric inequality.  It states that for any region $R$ in the plane, and any disk $\disk$ of radius $r$, we have that
\begin{equation*}
    \frac{\areaX{R}}{\lenX{\partial{R}}^2}
    \leq%
    \frac{\areaX{\disk}}{\lenX{\partial{\disk}}^2}
    =%
    \frac{\pi r^2}{(2\pi r)^2}
    =%
    \frac{1}{4\pi},
\end{equation*}
where $\areaX{R}$ denotes the \emphw{area} of $R$, and $\lenX{\crD}$ denotes the \emphw{length} of a curve $\crD$.  We view the ratio on the left as the \emphw{isoperimetric ratio} of the boundary of $R$. A good neck-cut would have a high ratio. A curve on a surface might bound a much larger area compared to its perimeter, but unlike the plane, we have to consider both sides bounded by the curve. As such, the \emphw{tightness} of a closed curve on a surface (of genus zero) is the minimum, among the two regions it bounds, of the isoperimetric ratio.

\mypar{Computing tightness.} %
Unfortunately, computing (or even approximating) the tightest cycle on a surface appears to be hopeless in terms of efficient algorithms.  Nevertheless, it provides us with an easily computable scoring function to compare cycles (i.e., the tighter, the better). There are cases where the optimal neck-cut is intuitively obvious, see \figref{ostrich}. We thus investigate sufficient conditions under which we can efficiently approximate the optimal neck-cut. To this end, we first formally define tightness in \secref{iso_necks}.

Specifically, for a loop, we look at the ratio between the area it encloses and its length squared (for a circle in the plane, this ratio is $1/4\pi$, as shown above). Clearly, the bigger the ratio, the better neck-like the cycle is. We formally define the underlying optimization problem in \secref{iso_necks}.

\mypar{Well-behaved surfaces, salient points, and discovering necks.}  We quantify, in \secref{well_behaved}, what it means for a surface to be well-behaved -- intuitively, it should have bounded growth (which all real-world surfaces seem to possess).  To discover the neck-cuts, we try to identify necks -- to this end, we study in \secref{salient} \emphw{salient} points that can be used to define necks. Intuitively, salient points are extremal points of the model (such as the tips of fingers in a human model). The paths connecting distant salient points (such as the path between the tip of a finger and the tip of a toe of a human model) can then be used to identify (implicitly) necks that should contain good cuts.

\mypar{Approximation algorithms.}

In \secref{opt_approx}, we present an efficient approximation algorithm to the optimal \collar (i.e., best neck-cut) under certain (strong) conditions. This approximation algorithm gives us reasonable bounds for the total time complexity required, while also motivating the core heuristics used in the practical algorithm.

In \secref{algorithm}, we present our practical algorithm. Our algorithm uses the salient points to define paths that the neck-like cycle must cross. Elegantly, we can now mainly rely on shortest path algorithms with little preprocessing and focus on filtering cycles of interest.

Our practical algorithm is based on a few heuristics derived from the properties of bottleneck curves, and thus can produce good bottleneck curves in surface meshes. We discuss the performance of our algorithm and the viability of generating these curves in a real-time setting.  Unlike previous work, our algorithm avoids complex global computation or iterative smoothing.

Numerous examples of the output of our algorithm are provided at \url{https://neckcut.space} and discussed in \secref{evaluate}.

\section{Isoperimetry, bottleneck cuts, and salient points}
\seclab{background}

\subsection{Isoperimetry and bottlenecks}
\seclab{iso_necks}

\mypar{Isoperimetric problem on surfaces.}

The isoperimetric problem asks to determine a plane figure of maximum area, with a specified boundary length. This problem dates back to antiquity, but a formal solution was not provided until the 19th century.  It is known \cite{c-rcwi-58} that circles, and in higher dimensions balls, are the optimal shapes. Even in the plane, proving it was quite a challenge. For a planar closed curve $\sigma$, consider its \emph{isoperimetric ratio}:
\begin{equation*}
    \rho(\sigma) = \frac{ \areaX{\interX{\sigma}\bigr.} }{ \lenX{\sigma}^2 },
\end{equation*}
where $\interX{\sigma}$ is in the interior region bounded by $\sigma$, $\areaX{\interX{\sigma}\bigr.}$ is its area, and $\lenX{\sigma}$ is the length of $\sigma$. This ratio can be arbitrarily small (i.e., consider a wiggly shape that has a small area but a long boundary). The isoperimetric inequality states that this ratio is maximized for the disk, where it holds with equality. Namely, the isoperimetric inequality states that, for any closed planar curve $\sigma$, we have
\begin{math}
    \rho(\sigma) \leq  \tfrac{1}{4\pi}.
\end{math}

\smallskip%

On a finite surface (say of genus zero) in 3D, it is natural to try to compute a closed curve on the surface as short as possible that splits the surface area into two ``large'' parts. As a concrete example, consider the natural cycle in the base of a human finger -- it does not partition the surface (i.e., a human model) even remotely equally. And yet, it is intuitively a good neck-cut.

\mypar{Tightness.}

To overcome this for a surface $\model$ (say of genus $0$), we define a variant of the isoperimetric ratio.

\begin{definition}
    \deflab{tightness}%
    For a surface $\model$ in $\Re^3$ of genus zero, and a region $\patch \subseteq \model$, let the \emphi{tightness} of $\patch$ is the ratio
    \begin{equation*}
        \tightnessX{\patch}
        =%
        \frac{\min\bigl( \areaX{\patch}, \areaX{\smash{\opatch}} \bigr) }
        {\lenX{\partial b}^2},
    \end{equation*}
    where $\partial{\patch}$ is the \emphw{boundary} of $\patch$, $\areaX{\patch}$ is the \emphw{area} of $\patch$, the complement of $\patch$ is $\opatch = \model \setminus \patch$, and $\lenX{\partial{\patch}}$ denotes the length of $\partial \patch$.  In particular, for a close curve $\crD$, that splits the surface into two parts $\patch$ and $\opatch$, its \emphi{tightness} is $\tightnessX{ \crD } = \tightnessX{\patch}$.
\end{definition}

Here, the closed curve is $\partial \patch$, the patches generated by $\partial \patch$ are $\patch$ and $\opatch$, and its tightness is the isoperimetric ratio.  It is thus natural to ask for the patch $b \subseteq \model$ with maximum tightness.

\bigskip
\noindent
\begin{minipage}{0.85\linewidth}
    \begin{example}
        \exalab{sphere}%
        Consider the $\model$ to be a unit radius sphere in 3D. It is not hard to see that the maximum tightness is realizable by $\patch$ being the (say, top) hemisphere of $\model$, and $\partial \patch$ being the equator. In that case, $\areaX{\patch} = 2\pi$, and $\tightnessX{\patch} = 2\pi / (2\pi)^2 = \tfrac{1}{2\pi}$.  Intuitively, closed curves are ``interesting'' as far as being a neck-cut if their tightness is at least $\tfrac{1}{2\pi}$.  (Compare this to the disk in the plane, that has tightness $\tfrac{1}{4\pi}$.)
    \end{example}
\end{minipage}
\hfill
\begin{minipage}{0.13\linewidth}
    \includegraphics[width=0.99\linewidth]{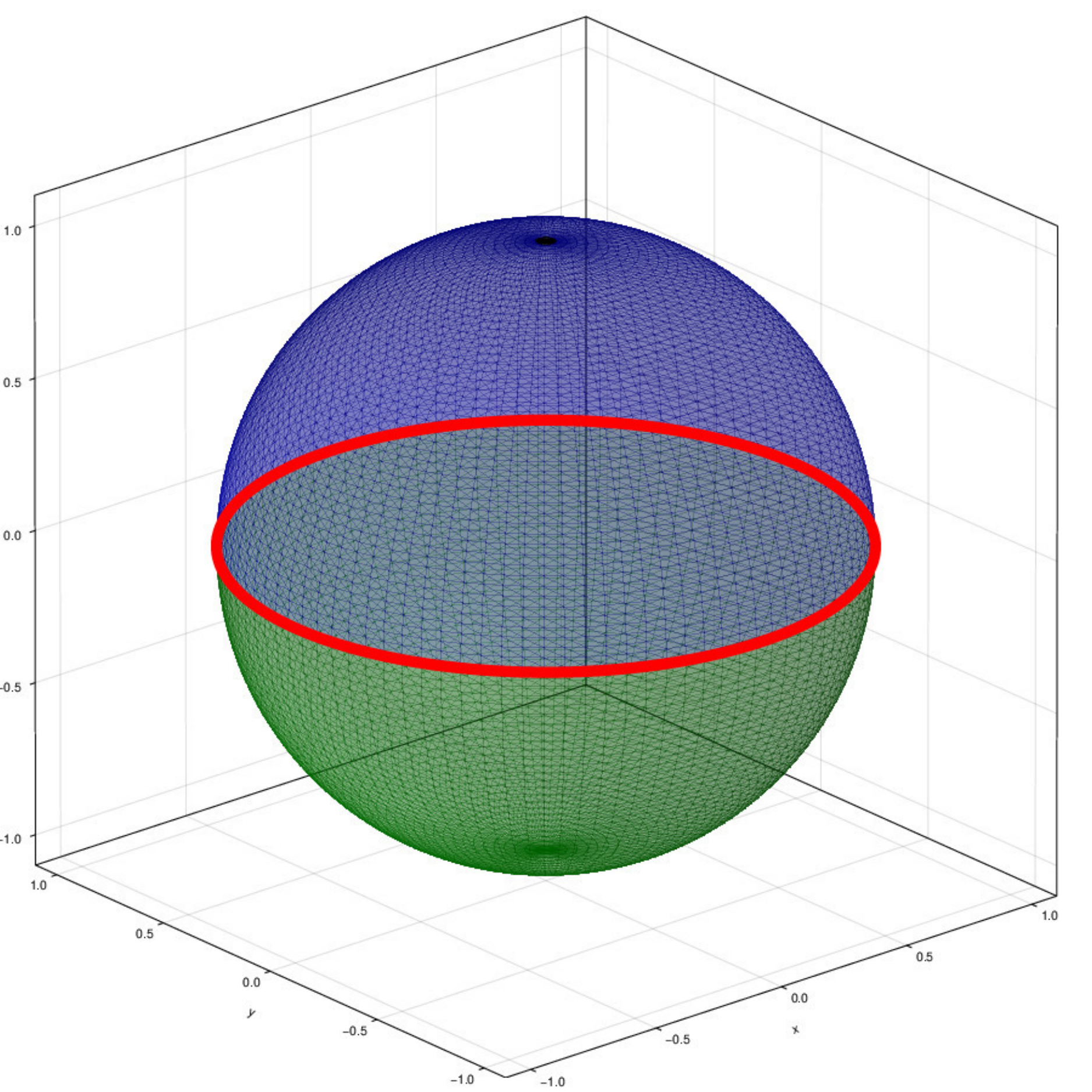}
\end{minipage}

\bigskip%

In general, proving that a specific patch $b \subseteq \model$ is the one realizing maximum tightness is a challenging problem, as the long history of the isoperimetric inequality testifies \cite{o-ii-78, cfgns.ea-ipsss-05, hhm-ips-99}. A bottleneck curve on a surface should have the property that it is short, while enclosing a large area on both sides (e.g., a neck of an hourglass). Thus, our proxy for finding a good bottleneck curve is going to be computing curves on a given surface that have high tightness.

\begin{problem}
    Given a surface $\model$, compute a region $\patch$ with tightness $\tightnessX{\patch}$ as large as possible.
\end{problem}

\begin{figure}[ht]
    \phantom{}\hfill%
    {\includegraphics[page=3]{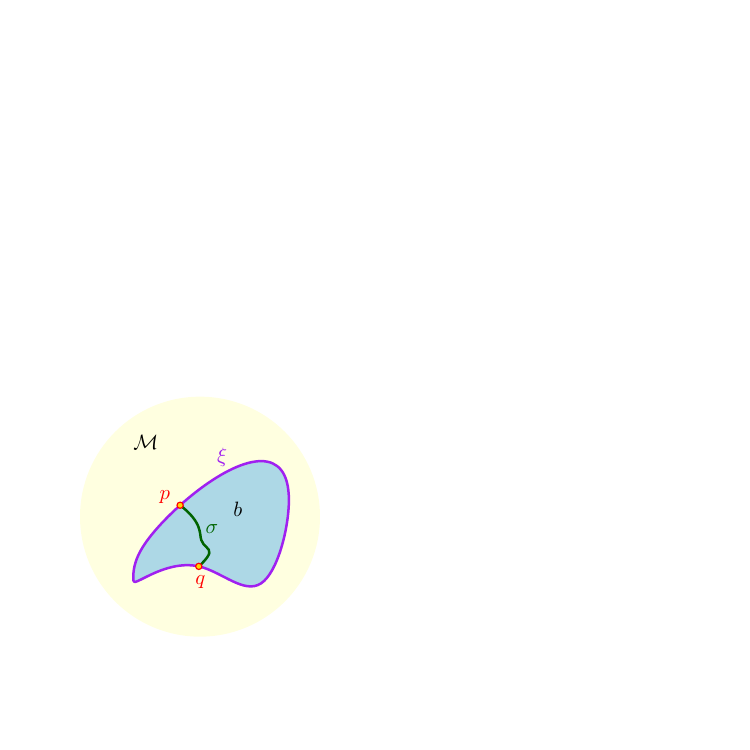}}%
    \hfill%
    \includegraphics[page=1]{figs/almost_geo} \hfill%
    \includegraphics[page=2]{figs/almost_geo} \hfill\phantom{}%

    \caption{Middle, Right: For some optimal bottleneck $\crA$, we consider the geodesic $\sigma$. Left: If the geodesic were to cross $\crA$ at $x$, it would be a contradiction, as shortcutting along $\crA$ would be shorter.}
    \figlab{opt_neck}
\end{figure}

\begin{definition}
    Let $\model$ be a surface of genus zero, and let $\crA$ be a cycle on $\model$.  Let $\patch$ be the region bounded by $\crA$ on $\model$.  The cycle $\crA$ is a \emphi{$\alpha$-bottleneck} of $\model$, if the tightness of $\patch$ is at least $\alpha$. The $\alpha$-bottleneck with maximum $\alpha$ on $\model$, is the \emphw{optimal bottleneck cut}, or simply a \emphi{\collar}.
\end{definition}

\subsection{Well behaved surfaces}
\seclab{well_behaved}

To provide some intuition for the approach used in \secref{algorithm}, we discuss a few properties of a \emph{well-behaved} surface.

\mypar{Slow-expansion on the surface.}

We are interested in surfaces such that their measure (i.e., area/volume) does not expand too quickly.  To this end, given a set $\sigma$ on a $\model$, its \emphw{$r$-expansion} is the region
\begin{equation}
    \sigma \oplus r
    =%
    \Set{ p \in \model}{\dMY{p}{\sigma} \leq r},
    \eqlab{expansion}
\end{equation}
where $\dMC$ is the geodesic distance along $\model$.

\begin{definition}
    A model $\model$ is \emphi{$\tau$-expanding}, if for any curve
    $\sigma \subseteq \model$, and any $r \geq 0$, we have that
    \begin{math}
        \areaX{\sigma \oplus r} \leq \tau( \lenX{\sigma} r + r^2)
    \end{math}
    and
    \begin{math}
        \lenX{\bd (\sigma \oplus r) } \leq \tau (\lenX{\sigma} + r).
    \end{math}
    The minimum such $\tau$ is the \emph{expansion} of $\model$, denoted by $\Expansion$.
\end{definition}

\begin{example}
    In the plane, Steiner inequality \cite{g-t-04} implies that for any curve $\sigma$ we have
    \begin{math}
        \areaX{\sigma \oplus r} \leq \pi r^2 + 2r \lenX{\sigma}
    \end{math}
    and
    \begin{math}
        \lenX{\partial(\sigma \oplus r)}
        \leq
        2 \lenX{\sigma} + 2\pi r.%
    \end{math}
    Thus, the plane is $2\pi$-expanding.
\end{example}

\begin{definition}
    A simple cycle $\sigma$ is \emphi{contractible} if one can continuously morph $\sigma$ to a single point.  A portion $\RP \subseteq \model$, and a cycle $\sigma \subset \RP$, the cycle is \emphi{$\RP$-contractible}, if it can be contracted to a point, while staying inside $\RP$.
\end{definition}

Note that while any cycle $\sigma$ on the original surface $\model$ is contractible, if $\model$ has genus zero, it might not be $\RP$-contractible -- for example, if $\RP$ is the result of taking $\model$ and creating two punctures on both sides of $\sigma$.  Intuitively, tight cycles are not locally contractible -- one has to go far to be able to collapse them to a point.

\begin{definition}
    For $\RP \subseteq \model$, and an $\RP$-contractible cycle $\sigma \subseteq \RP$, let $\patch_\sigma \subseteq \RP$ be the portion of $\model$ bounded by a closed curve $\sigma$ (the other portion of $\model$ bounded by $\sigma$ might contain portions outside $\RP$).  If $\RP = \model$, let $\patch_\sigma$ denote the smaller area patch (out of the two patches) induced by $\sigma$ on $\model$.  The region $\RP$ is \emphi{$\alpha$-well-behaved} if for all $\RP$-contractible cycles $\sigma \subset \RP$, we have that $\areaX{ \patch_\sigma } \leq \alpha\lenX{\sigma}^2$.
\end{definition}

\begin{remark}
    Consider a region $\RP \subseteq \model$, where $\model$ is $\tau$-expanding, such that any $\RP$-contractible cycle $\sigma$ in it, is $\bigl((\sigma \oplus r) \cap \RP\bigr)$-contractible, where $r = \tau \lenX{\sigma}$. Then, the $\tau$-expansion implies that $\areaX{ \patch_\sigma } \leq \areaX{\sigma \oplus r} \leq \tau( \lenX{\sigma} r + r^2) \leq 2 \tau^3 \lenX{\sigma}^2)$. Namely, $\RP$ is $2\tau^3$-well behaved.
\end{remark}

\subsection{Salient points to a bottleneck}
\seclab{salient}

In the following, we assume the given surface \model is triangulated, has genus $0$, and it has a useful \collar (i.e., $\alpha$-tight for a ``large'' $\alpha$). In addition, we assume $\model$ is $\tau$-expanding, where $\tau$ is some small constant.  Let $\neck$ denote this optimal $\alpha$-bottleneck of $\model$. The cycle $\neck$ breaks $\model$ into two regions $\patch$ and $\opatch$. Let $\salientY{\neck}{\patch}$ be the point furthest away from $\neck$ on $b$. Formally, we define
\begin{equation*}
    \salientY{\neck}{\patch}%
    =%
    \arg \max_{ p \in b} \dMY{p}{\neck}
    \qquad\text{where}\qquad
    \dMY{p}{\neck}  = \min_{q \in \neck} \dMY{p}{q}.
\end{equation*}
Such points are \emphi{salient}, and they are far from the bottleneck if the surface is well behaved.

\begin{lemma}[salient points are far]
    For $s = \salientY{\neck}{\patch}$, we have that $\dMY{s}{\neck} \geq \lenX{\neck}$, if $\alpha \geq 4 \tau$.
\end{lemma}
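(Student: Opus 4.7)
The plan is a direct area-versus-expansion argument by contradiction. Let $r = \dMY{s}{\neck}$. Since $s$ is the salient point of $\patch$ (the furthest point of $\patch$ from $\neck$), every point of $\patch$ lies within geodesic distance $r$ of $\neck$, so
\begin{equation*}
    \patch \subseteq \neck \oplus r.
\end{equation*}
This containment is the geometric core of the argument and is immediate from the definition of $s$.

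Next, I would combine two inequalities. On one side, the $\tau$-expansion hypothesis gives
\begin{equation*}
    \areaX{\patch} \;\leq\; \areaX{\neck \oplus r} \;\leq\; \tau\bigl(\lenX{\neck}\, r + r^2\bigr).
\end{equation*}
On the other side, since $\neck$ is an $\alpha$-bottleneck, the tightness definition yields
\begin{equation*}
    \areaX{\patch} \;\geq\; \min\bigl(\areaX{\patch},\areaX{\opatch}\bigr) \;\geq\; \alpha\,\lenX{\neck}^2,
\end{equation*}
regardless of which of the two sides is labeled $\patch$. Chaining these gives $\alpha\,\lenX{\neck}^2 \leq \tau(\lenX{\neck}\,r + r^2)$.

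The final step is the contradiction. Suppose for contradiction $r < \lenX{\neck}$; then $r\,\lenX{\neck} < \lenX{\neck}^2$ and $r^2 < \lenX{\neck}^2$, so the right-hand side is strictly less than $2\tau\,\lenX{\neck}^2$. This forces $\alpha < 2\tau$, contradicting $\alpha \geq 4\tau$ (the proof only needs $\alpha \geq 2\tau$; the stated $4\tau$ is a comfortable slack). Hence $r \geq \lenX{\neck}$, as claimed.

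I do not foresee a real obstacle: the whole argument is a one-line inclusion followed by plugging in the two defining inequalities. The only subtlety to watch is that the expansion $\neck \oplus r$ is taken in $\model$, so it could spill over from $\patch$ into $\opatch$ -- but this only enlarges $\areaX{\neck \oplus r}$, which is in the direction that helps us. One should also be mindful that $\patch$ might be the larger of the two regions bounded by $\neck$; the use of $\min$ in the tightness definition guarantees the lower bound $\alpha \lenX{\neck}^2$ on $\areaX{\patch}$ in either case, so the argument is symmetric and no special labeling is needed.
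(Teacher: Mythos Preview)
Your proof is correct and follows essentially the same approach as the paper: the inclusion $\patch \subseteq \neck \oplus r$, the lower bound $\areaX{\patch} \geq \alpha\lenX{\neck}^2$ from tightness, and the upper bound from $\tau$-expansion are combined identically to reach $\alpha\lenX{\neck}^2 \leq \tau(\lenX{\neck}\,r + r^2)$. The only difference is the final algebra: the paper completes the square, $\lenX{\neck}\,r + r^2 \leq (\lenX{\neck}/2 + r)^2$, and solves explicitly to obtain $r \geq (\sqrt{\alpha/\tau} - 1/2)\lenX{\neck}$, which under $\alpha \geq 4\tau$ gives the slightly stronger $r \geq \tfrac{3}{2}\lenX{\neck}$; your contradiction argument is cleaner and, as you note, already works with $\alpha \geq 2\tau$.
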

\begin{proof}
    Let $\ell = \lenX{\neck}$. By $\neck$ being an $\alpha$-bottleneck, we have that
    \begin{equation*}
        \areaX{\patch}
        \geq
        \alpha \lenX{\neck}^2.
    \end{equation*}
    On the other hand, for $r = \dMY{s}{\neck}$, we have
    \begin{math}
        b%
        \subseteq%
        \neck \oplus r.
    \end{math}
    By the $\tau$-expansion of $\model$, we have
    \begin{equation*}
        \areaX{\patch}
        \leq%
        \areaX{\neck \oplus r}
        \leq%
        \tau(  \lenX{\neck} r + r^2).
    \end{equation*}
    Thus, we have
    \begin{math}
        \alpha \lenX{\neck}^2 \leq \tau( \lenX{\neck} r + r^2) \leq%
        \tau( \lenX{\neck}/2 + r)^2.
    \end{math}
    This implies that
    \begin{equation*}
        \frac{\alpha}{\tau} \lenX{\neck}^2
        \leq
        (  \lenX{\neck}/2 + r)^2
        \quad\implies\quad%
        r
        \geq
        \pth{\sqrt{\frac{\alpha}{\tau}} - \frac{1}{2}} \lenX{\neck}
    \end{equation*}
    as $\alpha / \tau \geq 4$.
\end{proof}

\section{Approximating the optimal \collar}
\seclab{opt_approx}

\subsection{Identifying the neck where the \collar lies}

Consider the easy case, that not only is there a good \collar, but this collar is stable, in the sense that one can slide it up and down the ``neck'' and the quality of the \collar remains relatively the same, see \figref{ostrich}.

\begin{figure}[ht]
    \centering
    \includegraphics[width=0.9\linewidth]{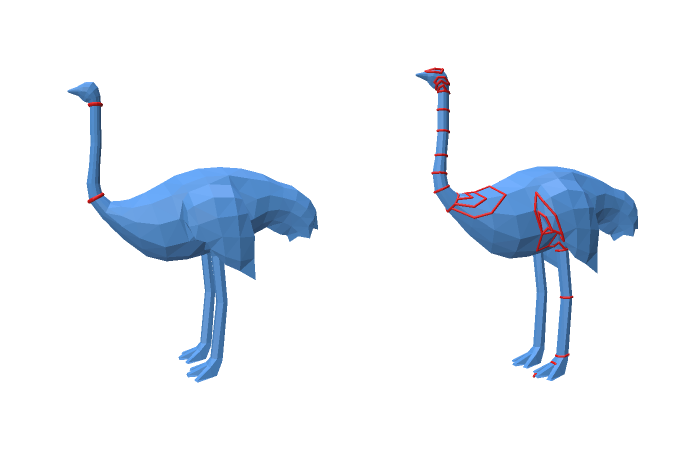}
    \caption{Left: A long neck with a stable collar. Right: Every geodesic cycle from the beak to the left foot.}
    \figlab{ostrich}
\end{figure}

\begin{definition}
    Let $\Path$ be a ``long'' shortest path on $\model$ with endpoints $s$ and $t$.  For every point $p \in \Path$, consider the shortest path $\spathX{p}$ from $p$ to itself, if we were to cut the surface $\model$ along $\Path$, and the path $\sigma$ has to connect $p$ to its copy. The closed curve $\spathX{p}$ is a \emphi{lasso} if
    \begin{equation*}
        \lenX{\spathX{p}}
        <
        \max\bigl(\dMY{p}{s}, \dMY{p}{t}\bigr),
    \end{equation*}
    and is denoted by $\spathX{p}$.
\end{definition}

Intuitively, a lasso is a ``short'' closed curve connecting $p$ to itself, which is shorter than going from $p$ to itself by going along the cut formed by $\Path$.

\begin{example}
    Let $\model$ be the surface of the following solid -- connect two large disjoint balls by a thin and long cylinder (i.e., a dumbbell) -- see \figref{dumbbell}. Consider the cylinder portion of the surface -- it forms a natural neck, and let $\RP$ denote it. Any curve going around the neck is not contractible on the neck, while any closed curve $\sigma$ that is contractible on the neck, is going to have area $O( \lenX{\sigma}^2)$. That is, the neck is $O(1)$-well behaved.
\end{example}
\begin{figure}
    \centering%
    \includegraphics[width=0.4\linewidth]{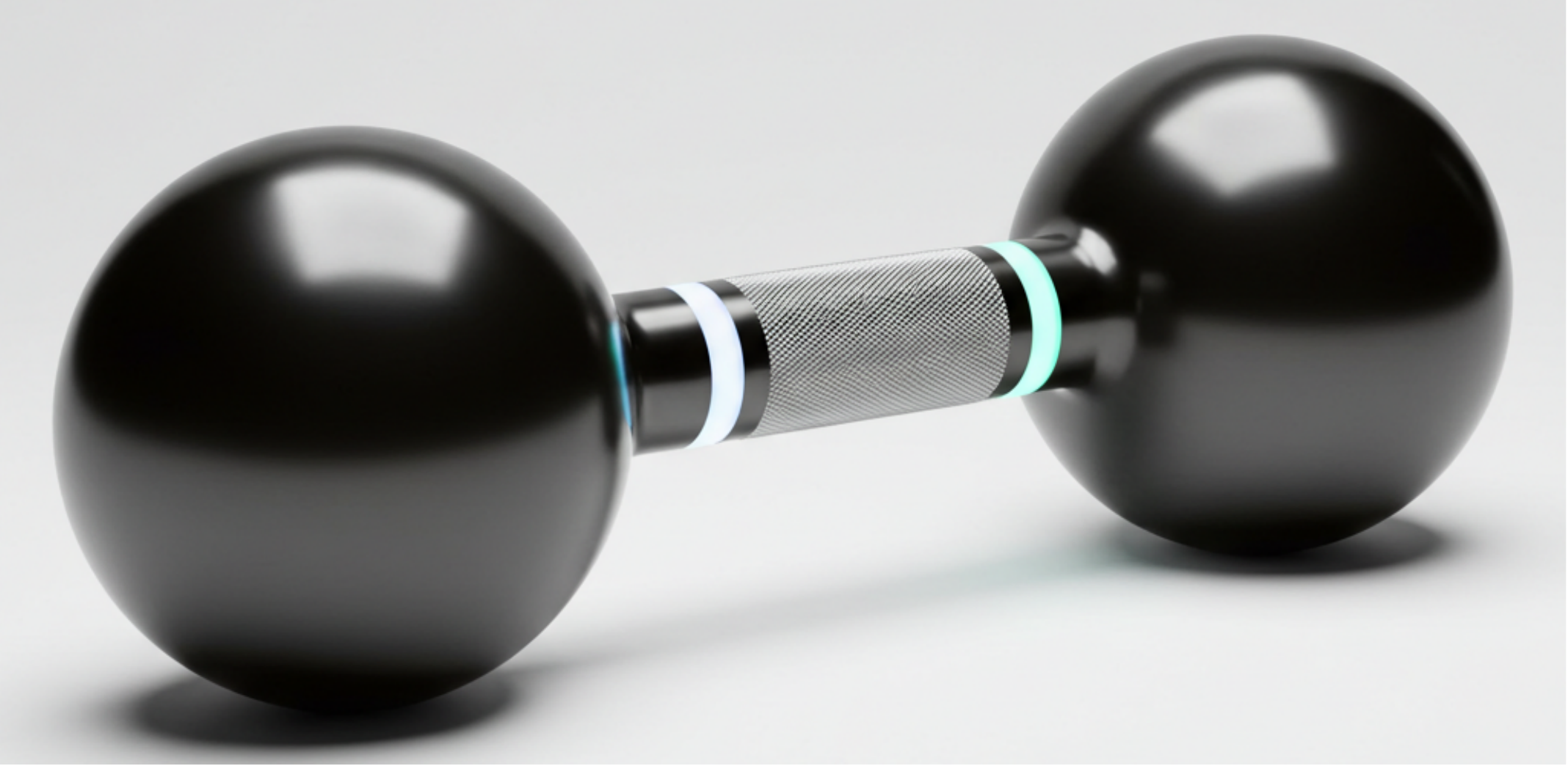}
    \caption{}
    \figlab{dumbbell}
\end{figure}

\begin{observation}
    Two lassos defined using the same base path $\pi$ can not cross each other.
\end{observation}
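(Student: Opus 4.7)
The plan is to argue by contradiction, via a standard swap-and-shorten argument on the cut surface obtained from $\model$ by slitting along $\Path$.

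First I would set the stage on the cut surface $\widetilde{\model}$. Because $\model$ has genus zero and $\Path$ is a simple arc with endpoints $s$ and $t$, the cut surface $\widetilde{\model}$ is a topological disk whose boundary consists of two copies of $\Path$ joined at $s$ and $t$; each interior point $p \in \Path$ has two boundary copies $p^+$ and $p^-$. By its definition, $\spathX{p}$ is precisely the shortest path from $p^+$ to $p^-$ in $\widetilde{\model}$, and a standard ``loop-deletion'' shortcut argument shows that it is a simple arc in $\widetilde{\model}$.

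Next I would exploit the nesting of the endpoints. Since $p_1$ and $p_2$ lie in some order along $\Path$, the four lift points appear on the boundary of $\widetilde{\model}$ in the nested cyclic order $p_1^+, p_2^+, p_2^-, p_1^-$. The simple arc $\spathX{p_1}$ separates $\widetilde{\model}$ into two regions, and the nesting forces both endpoints $p_2^+, p_2^-$ of $\spathX{p_2}$ to lie on the closure of the same region. Hence $\spathX{p_1}$ and $\spathX{p_2}$ meet transversally at an even number of points; if that number is zero the lassos are already disjoint outside $\Path$, so I would assume at least two transverse crossings exist.

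Then I would pick two crossings $x, y$ that are consecutive along $\spathX{p_2}$, and exchange the subarcs of $\spathX{p_1}$ and $\spathX{p_2}$ lying between $x$ and $y$. This produces new curves from $p_1^+$ to $p_1^-$ and from $p_2^+$ to $p_2^-$ whose total length equals $\lenX{\spathX{p_1}} + \lenX{\spathX{p_2}}$, but each new curve now carries a genuine corner at both $x$ and $y$ by transversality of the crossings. Smoothing any one of these corners in a small flat chart around it produces a strictly shorter path between its endpoints, contradicting the minimality of $\spathX{p_1}$ or $\spathX{p_2}$, and hence ruling out the existence of a crossing.

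The main obstacle I anticipate is the corner-smoothing step on a triangulated and thus only piecewise-flat surface: I would need to verify that the two sub-arcs meet at a strictly convex intrinsic turning angle at each transverse crossing, so that clipping the corner in a local flat chart genuinely decreases length. A secondary technicality is handling tangential (non-transverse) meetings of the two lifted lassos, which do not count as crossings for the parity argument; these can be absorbed by a generic-position perturbation of $\Path$ or by a direct argument that a tangency between two distinct shortest paths would already produce equal-length alternatives that can be locally shortened.
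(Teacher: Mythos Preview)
The paper states this as a bare observation with no accompanying proof, so there is nothing to compare your argument against. Your swap-and-shorten (uncrossing) argument on the slit disk $\widetilde{\model}$ is the standard way to justify such a statement and is correct: the key points---that cutting a genus-zero surface along a simple arc yields a disk, that the lifts $p_1^\pm,p_2^\pm$ appear nested on its boundary so the two lifted lassos must intersect an even number of times, and that a swap at two consecutive crossings produces competing paths that are at least as short and strictly shorter after rounding a transverse corner---are all sound. Your caveat about corner-smoothing on a piecewise-flat surface is well taken; in the paper's edge-graph setting one would instead invoke generic edge weights (so shortest paths are unique) to conclude that the swapped subarcs cannot have equal length, which removes the need for any local smoothing.
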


\begin{definition}
    Consider two lassos $\tau_1, \tau_2$ defined using a base path (which is a shortest path) $\pi$.  The \emphi{neck} $\neckC = \neckY{\tau_1}{\tau_2}$ is the area on the surface $\model$ lying between $\tau_1$ and $\tau_2$.  Such a region is a \emphi{$\beta$-neck} if it is $\beta$-well-behaved, for some $\beta > 0$.
\end{definition}

\begin{figure}[ht]
    \centerline{%
       \includegraphics{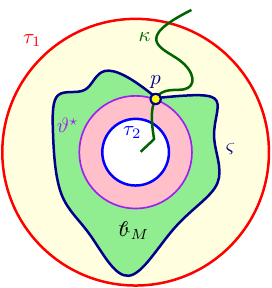}%
    }%
    \caption{}%
    \figlab{cycle}
\end{figure}

\begin{lemma}
    \lemlab{good_lasso}%
    Let $s,t$ be two points on $\model$, and let $\Path$ be the shortest path connecting them. Let $\tau_1,\tau_2$ be two lassos defined by points on $\Path$, and let $\neckC = \neckY{\tau_1}{\tau_2}$ be the induced $\alpha$-neck. Furthermore, assume that the optimal collar $\optCollar$ is contained in $\neckC$, and its tightness $\beta \geq 8\alpha$. Then, there exists a point $p \in \Path \cap \neckC$, such that its lasso $\crB = \spathX{p}$ is $\beta(1- \tfrac{4\alpha}{\beta})$-tight.
\end{lemma}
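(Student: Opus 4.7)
The plan is to pick $p$ as an intersection of the optimal collar $\optCollar$ with the base path $\Path$, and then bound the tightness of the induced lasso $\spathX{p}$ against that of $\optCollar$ via a symmetric difference argument controlled by the $\alpha$-well-behavedness of $\neckC$.

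First, I would argue that $\optCollar$ must cross $\Path$ at some point inside $\neckC$. The arc of $\Path$ that lies inside $\neckC$ connects a point on $\tau_1$ to a point on $\tau_2$; since $\optCollar \subseteq \neckC$ separates $\neckC$ into two pieces, one containing $\tau_1$ and the other containing $\tau_2$, continuity forces $\optCollar$ to meet this arc at some point $p$. Moreover, after cutting $\model$ along $\Path$, the cycle $\optCollar$ becomes a path from $p$ to its copy, which is a candidate for the shortest such path. Therefore
\begin{math}
    \lenX{\spathX{p}} \;\leq\; \lenX{\optCollar}.
\end{math}

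Second, I would bound the tightness of $\spathX{p}$ via a symmetric difference argument. Let $\patch^*$ be the side of $\optCollar$ that sits inside $\neckC$ (both sides have area at least $\beta \lenX{\optCollar}^2$), and let $\patch$ be the side of $\spathX{p}$ aligned with $\patch^*$ so that $\patch \triangle \patch^*$ is the ``small'' symmetric difference. Because $\spathX{p}$ and $\optCollar$ both pass through $p$, the boundary $\bd(\patch \triangle \patch^*)$ decomposes into closed loops, each a concatenation of an arc of $\spathX{p}$ with an arc of $\optCollar$, of total length at most $\lenX{\spathX{p}} + \lenX{\optCollar} \leq 2\lenX{\optCollar}$. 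If these loops lie in $\neckC$ and are $\neckC$-contractible, the $\alpha$-well-behavedness of $\neckC$ yields
\begin{equation*}
    \areaX{\patch \triangle \patch^*}
    \;\leq\;
    \alpha \bigl( \lenX{\spathX{p}} + \lenX{\optCollar} \bigr)^2
    \;\leq\;
    4\alpha \lenX{\optCollar}^2.
\end{equation*}
Since $\opatch \triangle \opatch^* = \patch \triangle \patch^*$, both $\areaX{\patch}$ and $\areaX{\opatch}$ are at least $\beta\lenX{\optCollar}^2 - 4\alpha\lenX{\optCollar}^2 = (\beta - 4\alpha)\lenX{\optCollar}^2$. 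Dividing by $\lenX{\spathX{p}}^2 \leq \lenX{\optCollar}^2$ gives $\tightnessX{\spathX{p}} \geq \beta - 4\alpha = \beta(1 - 4\alpha/\beta)$, as claimed.

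The main obstacle will be justifying that the boundary components of $\patch \triangle \patch^*$ actually sit in $\neckC$ and are $\neckC$-contractible, so that the well-behavedness hypothesis applies. This amounts to showing that the lasso $\spathX{p}$ cannot stray outside the neck; any excursion would require traversing a portion of $\model \setminus \neckC$, which should be too expensive given $\lenX{\spathX{p}} \leq \lenX{\optCollar}$ together with the gap $\beta \geq 8\alpha$ between the collar's tightness and the neck's slack. A secondary technicality is the case where $\optCollar$ meets $\Path$ at several points: there one either picks $p$ to be a crossing that splits $\optCollar$ into a short and a long arc and runs the argument with the short arc, or partitions $\patch \triangle \patch^*$ along all crossings and sums the well-behavedness bounds componentwise.
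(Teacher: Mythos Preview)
Your approach is essentially the same as the paper's: choose $p \in \Path \cap \optCollar$, use $\lenX{\spathX{p}} \leq \lenX{\optCollar}$ because $\optCollar$ is itself a candidate path around $\Path$, bound the area of the symmetric difference between the two cycles by $4\alpha\lenX{\optCollar}^2$ via the $\alpha$-well-behavedness of $\neckC$, and conclude. The paper organizes the symmetric-difference step slightly differently---it first treats the case where $\spathX{p}$ and $\optCollar$ meet only at $p$ (so the region between them is bounded by a single loop of length $\leq 2\lenX{\optCollar}$), and then handles multiple intersections by looking at the arrangement of $\spathX{p} \cup \optCollar$, summing the well-behavedness bound over the intermediate faces, and using $\sum_i \ell_i^2 \leq (\sum_i \ell_i)^2$---but this is exactly your ``partition and sum componentwise'' option.

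One remark: the obstacle you flag (that $\spathX{p}$ might leave $\neckC$, so the symmetric-difference loops might fail to be $\neckC$-contractible) is real, and the paper's proof does not address it either; it simply asserts the intermediate faces are contractible disks and applies well-behavedness. So you have not missed anything the paper supplies---if anything, you have been more scrupulous in identifying what is being assumed.
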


\begin{proof}
    Let $p$ be any point in $\Path \cap \optCollar$ as the base point for the construction, and let $\GLasso =\spathX{p}$ be the associated lasso.

    Assume, for now, that the lasso $\GLasso$ only intersects $\optCollar$ at $p$ (as in \figref{cycle}).  In that case, $\GLasso$ and $\optCollar$ are homotopic, and let $\patchM$ be the area in between them.  By the $\alpha$-behaveness of $\neckC$, and since $\partial \patchM$ is $\optCollar \cup \GLasso$ (and is $\neckC$-contractible), we have that
    \begin{equation*}
        \areaX{\patchM}
        \leq%
        \alpha (\lenX{\optCollar} + \lenX{\GLasso})^2%
        \leq%
        4 \alpha \lenX{\optCollar}^2,
    \end{equation*}
    as $\lenX{\GLasso} \leq \lenX{\optCollar}$ -- indeed, $\optCollar$ is a candidate for the shortest path connecting $p$ to itself ``around'' $\Path$, but $\GLasso$ is the shortest one.

    For a closed curve $\crD$, let $\regX{\crD}$ and $\oregX{\crD}$ be the two parts of $\model$ bounded by $\crD$.  Observe that
    \begin{equation*}
        B
        =%
        \min\bigl( \areaX{\regX{\GLasso}},\areaX{\smash{\oregX{\GLasso}}}\bigr)
        \geq%
        \min\bigl( \areaX{\regX{\optCollar}},
        \areaX{\smash{\oregX{\optCollar}}} \bigr)
        - \areaX{\patchM}.
    \end{equation*}
    The tightness of $\GLasso$ is thus
    \begin{math}
        \displaystyle%
        \tightnessX{\GLasso}
        =%
        \frac{B}
        {\lenX{\GLasso}^2}
        \geq
        \frac{\min\bigl( \areaX{\patch}, \areaX{\smash{\opatch}} \bigr)
           - \areaX{\patchM}}{\lenX{\optCollar}^2}
        \geq
        \tightnessX{\optCollar} - 4 \alpha.
    \end{math}

    \begin{figure}[ht]
        \centerline{%
           \includegraphics[page=2]{figs/cycle}%
        }%
        \caption{}%
        \figlab{cycle_in_out}
    \end{figure}

    The slightly harder case is when $\GLasso$ and $\optCollar$ have several intersections. In that case, the set $\GLasso \cup \optCollar$ forms an arrangement -- the ``inner'' region/face denoted by $\patch_s$ and the outer face denoted by $\patch_t$. So consider all the other faces $f_1, \ldots, f_k$ in this arrangement. These faces are all contractible disks. Let $\ell_i$ be the boundary of the $i$\th face, for $i=1,\ldots, k$. Observe that every edge $e$ of $\GLasso$ or $\optCollar$ contributes at most $2\lenX{e}$ to the total lengths of these boundary faces. Thus, we have
    \begin{math}
        \sum_{i=1}^k \ell_i
        \leq%
        2(\lenX{\GLasso} + \lenX{\optCollar})
        \leq
        2\lenX{\optCollar}.
    \end{math}
    Setting $\patchM = (\regX{\optCollar} \setminus \regX{\GLasso} )\cup (\regX{\GLasso} \setminus \regX{\optCollar} )$ to be region that is the symmetric difference between $\regX{\optCollar}$ and $\regX{\GLasso}$, we have
    \begin{equation*}
        \areaX{\patchM}
        \leq%
        \sum_{t=1}^k \areaX{f_t}
        \leq%
        \sum_{t=1}^k \alpha \ell_t^2
        =%
        \alpha \sum_{t=1}^k \ell_t^2
        \leq%
        \alpha \Bigl( \sum_{t=1}^k \ell_t \Bigr)^2
        \leq%
        4 \alpha  \lenX{\optCollar}^2.
    \end{equation*}
    The claim now follows from the argument above, and observing that
    \begin{math}
        \tightnessX{\GLasso}%
        \geq%
        \tightnessX{\optCollar} - 4 \alpha%
        =%
        \beta \bigl( 1 - \frac{4\alpha}{\beta} \bigr).
    \end{math}
\end{proof}

\begin{corollary}
    In the settings of \lemref{good_lasso}, if the tightness $\beta$ of the optimal collar on an $\alpha$-neck $\neckC$ is $\geq 4\alpha / \eps$, for $\eps \in (0,1)$, then there is lasso on $\neckC$ of tightness $\geq (1-\eps) \beta$. Namely, the lasso has tightness $\eps$-close to optimal.
\end{corollary}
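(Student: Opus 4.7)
The plan is essentially a direct substitution into \lemref{good_lasso}, so there is not much work to do. First I would note that the hypothesis $\beta \geq 4\alpha/\eps$ with $\eps \in (0,1)$ is (for $\eps \leq 1/2$) at least as strong as the $\beta \geq 8\alpha$ precondition of \lemref{good_lasso}, so the lemma applies and produces a point $p \in \Path \cap \neckC$ whose lasso $\crB = \spathX{p}$ satisfies
\begin{equation*}
    \tightnessX{\crB}
    \;\geq\;
    \beta\pth{1 - \frac{4\alpha}{\beta}}.
\end{equation*}

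Next I would simply plug in the assumed lower bound on $\beta$. Since $\beta \geq 4\alpha/\eps$, we have $4\alpha/\beta \leq \eps$, and hence
\begin{equation*}
    \tightnessX{\crB}
    \;\geq\;
    \beta\pth{1 - \frac{4\alpha}{\beta}}
    \;\geq\;
    \beta(1 - \eps),
\end{equation*}
which is exactly the tightness bound claimed in the corollary.

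There is no real obstacle; the ``main'' step is recognizing that \lemref{good_lasso} was stated in a form additive in $\alpha$ and multiplicative in $\beta$, so the ratio $4\alpha/\beta$ that appears in its conclusion is precisely the quantity controlled by the corollary's hypothesis. The only mild subtlety is the edge case $\eps \in (1/2, 1)$, where the $\beta \geq 4\alpha/\eps$ hypothesis does not immediately imply $\beta \geq 8\alpha$; I would handle that by observing that for such $\eps$ the target bound $(1-\eps)\beta$ is weaker than $\beta/2$, so the conclusion trivially holds once any lasso of tightness at least $\beta/2$ is exhibited (and in the regime $\eps \leq 1/2$ the lemma applies verbatim). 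This gives the ``$\eps$-close to optimal'' lasso on $\neckC$ asserted in the statement.
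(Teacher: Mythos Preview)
Your proposal is correct and matches the paper's treatment: the paper states this corollary without proof, as an immediate consequence of \lemref{good_lasso}, and your direct substitution of $4\alpha/\beta \leq \eps$ into the lemma's conclusion is exactly the intended one-line argument. One small remark on your edge-case handling for $\eps \in (1/2,1)$: rather than needing to separately ``exhibit a lasso of tightness $\geq \beta/2$'' (which you have not actually done without invoking the lemma), it is cleaner to observe that the proof of \lemref{good_lasso} in fact establishes the additive bound $\tightnessX{\crB} \geq \beta - 4\alpha$ without ever using the hypothesis $\beta \geq 8\alpha$, so the substitution goes through for all $\eps \in (0,1)$ directly.
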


\subsection{The algorithm for computing the collar}
\lemref{good_lasso} implies that if the optimal tightness is much bigger than the $\alpha$ (the well-behaveness of the neck), then one can efficiently compute a collar with tightness close to optimal.  Significantly, such a lasso is efficiently computable.

We outline here the basic idea -- our purpose is to present a polynomial-time approximation algorithm. To this end, we guess the points $s$ and $t$ of \lemref{good_lasso}, and compute the shortest path $\Path$ between them. Next, we guess the two points, $x,x' \in \Path$, defining the lassos bounding the optimal collar $\optCollar$, and we compute these two lassos. We compute the region $\neckC$ bounded in between $\tau_1$ and $\tau_2$, and verify that it indeed has the topology of a sleeve (this can be done by example by computing its Euler characteristic, and verifying that $\tau_1$ and $\tau_2$ cover all the boundary edges of this patch. Now, one can try to compute the shortest path around the neck for each vertex $v \in \Path \cap \neckC$, and explicitly determine its tightness. The maximum one found is the desired approximation. If the model has size $n$, the running time of this algorithm is $O(n^5)$. We thus get the following.

\begin{theorem}
    \thmlab{o_collar}%
    Let $\model$ be a triangulated surface in $3D$ with genus $0$ and $n$ vertices.  Assume the optimal collar $\optCollar$ on $\model$ lies on an $\alpha$-neck $\neckC$ that is induced by a shortest path $\Path$, and two lassos $\tau_1, \tau_2$ (see \lemref{good_lasso}). Furthermore, the tightness $\tightnessX{\optCollar} \geq 8\alpha$. Then, one can compute, in $O(n^5)$ time, a closed curve $\crB$ such that $\tightnessX{\crB} \geq \tightnessX{\optCollar} - 4\alpha \geq \tightnessX{\optCollar}\!/2$.
\end{theorem}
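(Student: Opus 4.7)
The plan is a polynomial-time brute-force algorithm that enumerates all candidate instantiations of the configuration witnessed by \lemref{good_lasso}. First, iterate over all $O(n^2)$ ordered pairs $(s, t)$ of vertices of $\model$ and compute the shortest path $\Path$ between them. Next, for each ordered pair of vertices $(x, x') \in \Path \times \Path$ -- another $O(n^2)$ choices -- compute the two lassos $\tau_1 = \spathX{x}$ and $\tau_2 = \spathX{x'}$ by cutting $\model$ along $\Path$, duplicating the cut, and running a single-source shortest path from $x$ (resp.\ $x'$) back to its copy on the cut surface. Extract the candidate neck $\neckC$ bounded between $\tau_1$ and $\tau_2$ via a face flood-fill, and validate that it is a topological sleeve by checking $\chi(\neckC) = 0$ and $\partial \neckC = \tau_1 \cup \tau_2$; discard the guess otherwise. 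For each surviving $\neckC$, iterate over every vertex $v \in \Path \cap \neckC$, compute the lasso $\spathX{v}$, evaluate its tightness by summing face areas on either side via BFS on the dual graph, and retain the best lasso found across all guesses.

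Correctness follows from \lemref{good_lasso}. The hypothesis supplies witnesses $s^\star, t^\star$ whose shortest path $\Path^\star$ carries points $x^\star, x'^\star$ whose lassos bound an $\alpha$-neck $\neckC^\star \supseteq \optCollar$, with $\tightnessX{\optCollar} \geq 8\alpha$. Assuming the witnesses can be taken at mesh vertices (or after a pre-processing step that snaps them to nearest vertices without consuming the slack), the enumeration eventually reaches the configuration $(s^\star, t^\star, x^\star, x'^\star)$. For this configuration, \lemref{good_lasso} furnishes a vertex $p \in \Path^\star \cap \neckC^\star$ with
\begin{equation*}
   \tightnessX{\spathX{p}}
   \;\geq\;
   \tightnessX{\optCollar} - 4\alpha
   \;\geq\;
   \tightnessX{\optCollar}/2,
\end{equation*}
the second inequality using $\tightnessX{\optCollar} \geq 8\alpha$. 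Since the algorithm explicitly computes and scores $\spathX{p}$ in this round, the returned curve $\crB$ satisfies the bound asserted by the theorem.

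The running time is dominated by the four nested vertex loops, which generate $O(n^4)$ configurations. With a one-time pre-computation of single-source shortest paths from every vertex on the cut mesh, each lasso can be retrieved in $O(n)$ time and each tightness evaluation takes $O(n)$ additional work, for $O(n)$ candidate vertices $v$ per configuration. This yields the claimed $O(n^5)$ bound. The main obstacle, rather than the enumeration itself, is the topological bookkeeping: one must define $\spathX{v}$ cleanly on the cut surface (so that it is a genuine lasso rather than a degenerate loop), certify that $\neckC$ is really a sleeve and not a disk or more exotic patch glued by combinatorial artifacts, and justify that restricting the witnesses of \lemref{good_lasso} to mesh vertices does not erode the $\tightnessX{\optCollar} \geq 8\alpha$ slack on which the tightness guarantee depends.
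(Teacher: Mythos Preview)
Your proposal is correct and follows essentially the same approach as the paper: enumerate all pairs $(s,t)$, compute $\Path$, enumerate all pairs $(x,x')\in\Path$, build and topologically validate the candidate neck via the Euler characteristic and a boundary check, then score every lasso $\spathX{v}$ for $v\in\Path\cap\neckC$ and return the tightest one, invoking \lemref{good_lasso} for the quality guarantee. The paper's write-up is terser and does not dwell on the vertex-snapping caveat you raise, but the algorithm and the $O(n^5)$ accounting are the same.
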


\begin{remark}
    The above algorithm inspires our practical algorithm, which achieves sub-quadratic running time by avoiding the need to guess all pertinent information. Thus, we had not spent energy on improving the running time of \thmref{o_collar}. In particular, the stated running time should be taken as evidence that the optimal collar can be approximated efficiently under certain conditions.
\end{remark}

\section{A practical algorithm}
\seclab{algorithm}

We present here a practical algorithm that utilizes a few heuristics to locate optimal bottleneck curves while maintaining fast performance.  This algorithm is an approximation and may not find all optimal curves. We assume the input is a surface model of genus zero.

In the first stage, the algorithm locates salient points that lie on the tips of features. In the second stage, the algorithm takes paths between salient points and searches for bottleneck curves. We next describe these in detail. In \secref{evaluate}, we discuss how this algorithm performs on various real-world inputs.

\subsection{Computing salient points}

We aim to identify distant pairs of \emph{salient points} to search for bottleneck curves. Usually, these points represent the tips of various features on a mesh (such as the tip of a spike, finger, or head) but do not necessarily lie on the convex hull of the mesh.  Finding exact salient points can be difficult and time-consuming. We use a standard method to locate points that are near ``true'' salient points, but are sensitive to the starting location of our search.  Previous work has focused on identifying salient points and utilizing them in mesh decomposition \cite{zh-dpmbm-04}. Other methods have also utilized feature points and surface methods to perform mesh decomposition \cite{ddu-smsbg-23, gf-cs3m-09, klt-msufp-05, dgg-ssmfd-03}. We use shortest path algorithms to locate these salient points, which act as reasonable estimates for the exact salient points that would be used to find bottleneck cuts.

We emphasize that in the following, we treat the lengths of the edges of the mesh as forming a graph, and our algorithm would work on this triangulated graph. The length of the edges is the Euclidean distance between their endpoints.

\paragraph{Step I: Salient point via approximate diameter.}
The algorithm first computes a point on the mesh that is part of a $2$-approximation of the diameter of the mesh: \medskip%
\begin{compactenumI}
    \item The algorithm picks an arbitrary point on the graph $s$.

    \smallskip%
    \item The algorithm computes the shortest-path tree $T_s$ from $s$ using Dijkstra. Let $u$ be the leaf with the maximum distance from $s$.

    \smallskip%
    \item The algorithm computes the shortest-path tree $T_u$ from $u$, again using Dijkstra. Let $v$ be the leaf with maximum distance from $u$. The pair $u$ and $v$ is a $2$-approximation to the diameter of the mesh (when restricted to paths along the edges)
\end{compactenumI}
\medskip%
A vertex $x$ is a \emphi{salient point} if it is a local maximum of its neighbors with respect to distance to $u$. That is,
\begin{equation}
    \forall y \in \Gamma(x) \qquad \dMY{u}{y} < \dMY{u}{x},
    \eqlab{local_condition}
\end{equation}
where $\Gamma(x)$ is the \emphi{neighborhood} of $x$ (i.e., set of all vertices adjacent to $x$). Because $v$ is the furthest point from $u$, all its neighbors must have a shorter distance, and thus it is the first salient point computed.

\paragraph{Stage II: Collecting candidates for salient points.}
We now extend this idea further.  For every leaf $x$ on $T_u$, we check all its neighbors on the mesh and mark it as a salient point if it is a local maximum of the distance function from $u$. That is, the algorithm checks if $x$ complies with \Eqref{local_condition}.  Let $C$ denote the set of candidate points picked.

\paragraph{Stage III: Filtering the candidates.}
Depending on the quality of the mesh, some further filtration of salient points may be necessary. Extremely noisy surfaces can create several local maximum points close together. In these cases, we want to eliminate any salient point within a user-selected (hop) distance $r$ from any salient point. To this end, the algorithm performs an $r$-depth breadth-first search from each salient point of $C$, removing from this set any point that fails to be a local maximum within its $r$-hop neighborhood. In practice, $r$ is a small constant, and it is sufficient to handle small perturbations in a mesh.  See \figref{salient} for an example.

\begin{figure}
    \centering
    \includegraphics[width=0.8\linewidth]{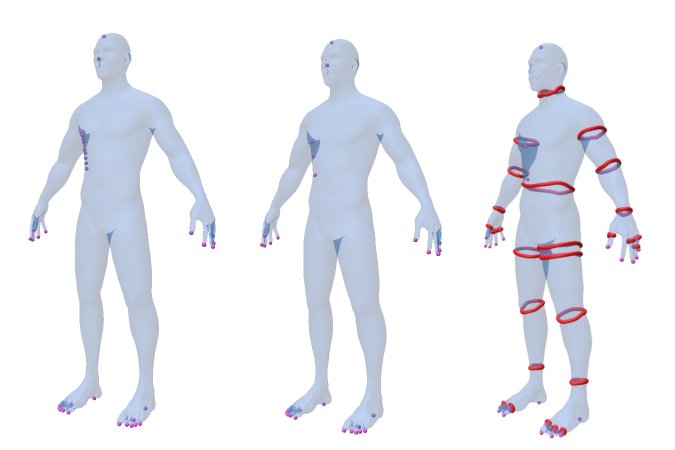}
    \caption{The salient points of a human mesh, with no filtering (left), $r=10$ (middle), and $r=20$ (right). The bottleneck curves are shown in the right mesh.}
    \figlab{salient}
\end{figure}

\begin{observation}
    For $r$ a constant, the above algorithm runs in $O(n \log n)$ time, assuming the vertices' degrees in the mesh are bounded. Indeed, it involves running the Dijkstra algorithm several times, and doing local \BFS of depth at most $r$ for each candidate salient point.
\end{observation}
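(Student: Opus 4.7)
The plan is to account for the cost of each of the three stages of the salient-point computation separately and observe that the Dijkstra invocations in Stage~I dominate. Throughout, let $\Delta$ denote the (constant) bound on vertex degrees, so the mesh, viewed as a graph, has $n$ vertices and $O(\Delta n) = O(n)$ edges.

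First I would handle Stage~I. The algorithm picks an arbitrary source $s$ and invokes Dijkstra to build $T_s$, then invokes it again from the computed leaf $u$ to build $T_u$. Using a binary (or Fibonacci) heap, each Dijkstra call on a graph with $O(n)$ edges runs in $O(n \log n)$ time, so this stage costs $O(n \log n)$ in total. Scanning $T_s$ and $T_u$ to find the leaves of maximum distance is a linear pass.

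Next I would bound Stage~II. For each leaf $x$ of $T_u$, the algorithm examines $\Gamma(x)$ to test the local-maximum condition in \Eqref{local_condition}. Since $\cardin{\Gamma(x)} \leq \Delta = O(1)$ and the distances $\dMC(u, \cdot)$ are already recorded in $T_u$, each such check costs $O(1)$. Summed over at most $n$ leaves, Stage~II contributes $O(n)$.

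Finally, I would address Stage~III, which is the step most in need of an argument since one must verify that the $r$-hop BFS does not blow up. The candidate set $C$ produced by Stage~II has $\cardin{C} \leq n$. For each $p \in C$, the algorithm performs a BFS to depth $r$ in the mesh. Because the degree is bounded by $\Delta$ and $r$ is a constant, the number of vertices reached is at most $1 + \Delta + \Delta^2 + \cdots + \Delta^r = O(\Delta^r) = O(1)$, and the BFS runs in $O(1)$ time per candidate. The total cost of Stage~III is therefore $O(n)$. Combining the three bounds yields $O(n \log n) + O(n) + O(n) = O(n \log n)$ overall. The only mild subtlety is observing that the bounded-degree hypothesis is what keeps both the edge count linear (needed for Dijkstra) and the $r$-ball size constant (needed for the filtering BFS); beyond that, the argument is a straightforward bookkeeping.
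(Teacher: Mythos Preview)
Your proposal is correct and follows exactly the reasoning the paper gives inline in the observation itself: the justification there is simply that the algorithm runs Dijkstra a constant number of times and does a depth-$r$ \BFS from each candidate. Your write-up is a faithful, more detailed expansion of that one-line argument, with the same cost accounting per stage.
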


\subsection{Finding bottleneck curves}

\subsubsection{Generating candidate neck-cuts from a shortest path}

Consider a shortest path $\pi$. The algorithm runs a shortest-path algorithm from some vertex $v \in \pi$ to itself, where no vertices on $\pi$ can be used, thus preventing the shortest-path algorithm from crossing $\pi$. That is, the algorithm ``cuts'' the surface along $\pi$, computing shortest paths ``around'' $\pi$ from $p$ to its copy on the other side of $\pi$. This path is a geodesic (with respect to $\pi$) cycle which includes $v$. This process generates an ordered sequence of cycles $\Pi$, along $\pi$, base on their base vertex $p$.  The task is to select cycles with a local maximum tightness. To compute the tightness requires computing the area bounded by each cycle $\crA \in \Pi$, and selecting local maxima within $\Pi$. On genus zero objects, we can compute the area between every adjacent pair of cycles (e.g. $\crA_i, \crA_{i+1} \in \Pi$ for each $i$), by doing \BFS on the area in between the two curves, and then use prefix sums to compute the area bounded by one cycle efficiently. For a cycle $\crA$, if its tightness is the local maximum among its neighbors, then it is likely a good bottleneck.

Cycles near boundaries and salient points might be of low quality.  Likewise, cycles that bound a small amount of area, whilst also being short, can lead to other cycles which appear to be local maxima among their neighbors. To that end, we refer to the calculation in \exaref{sphere}, and filter out any cycle with a computed tightness less than $\frac{1}{2\pi}$ (with some tolerance due to inaccuracies with triangulated meshes). These cycles are not representative of a large enough bounded area, and thus do not make good bottleneck curves.

\subsubsection{Computing a skeleton:  A small set of shortest paths}

To use the above algorithm for computing neck-cuts, we need to feed it the shortest paths to run on. A natural but naive approach would be to try all pairs of shortest paths between pairs of salient points in the set $C$. To reduce the number of paths, we compute a ``cheap'' skeleton of the salient points. A natural candidate for such a skeleton is the Steiner tree of the points of $C$ on $\model$, but unfortunately, computing the exact Steiner tree is \NPComplete. To avoid this, we use a Heuristic to construct such a tree. The algorithm starts with the approximate diametrical pair $u$ and $v$ and connects them by the shortest path. The algorithm then adds the remaining points to the constructed graph iteratively, connecting the $i$\th salient point to its nearest-neighbor on the tree constructed so far, adding this connecting path to the collection of shortest-paths computed; their union is the computed skeleton. We then use these paths with the above procedure to generate the good neck-cuts.

\begin{remark}
    The algorithm described here does not have any guarantees as far as the quality of approximation it provides. One can be a bit more careful and provide $2$-approximation by being slightly more careful. Indeed, assume the set $C = \{s_1,\ldots, s_k \}$. In the $i$\th iteration, for $i=2, \ldots k-1$, the algorithm computes the closest point in $C \setminus C_i$ to $C_i = \{ c_1, \ldots, c_i\}$. This step can be done by a single run of Dijkstra, and let $c_{i+1}$ denote this point. The algorithm then computes the shortest path from $c_i$ to $T_{i-1}$ (i.e., the tree constructed in the first $i-1$ iterations), and adds this path to this tree, to form the new tree $T_{i}$. This algorithm can be viewed as running Prim's algorithm on the induced complete graph over $C$ (where the weights are the shortest path distances between the corresponding points on the model). The constructed tree $T_k$ is clearly no heavier than this \MST, which in turn is a $2$-approximation to the optimal Steiner tree. The running time of this algorithm is $O( k n \log n)$. We have not implemented this exact variant, since it does not really seem to matter in practice.
\end{remark}

\subsection{Running time analysis}
\seclab{runtime}

Computing the set of candidates takes $O(n \log n)$ time, for running shortest paths, and the $r$-depth filtering. Let $k=|C|$ and $|K|$ be the number of vertices in the skeleton. Computing the skeleton itself takes $O(k n\log n)$ time. The bulk of our time results from computing the cycles along the paths, which takes $O(|K|n \log n)$ time. In practice, this is much faster, since we are running an $st$-shortest path algorithm, which terminates early.

There are additional optimizations in practice that can be employed that would further speed up the cycle searching. Mainly, computing the cycles along each path is not reliant on data from the other paths in $K$. Namely, with parallelism, each path can be computed independently, thus bounding the runtime based on the number of available cores and the longest path within the skeleton.

\section{Evaluation}
\seclab{evaluate}

\newcommand{\CPP}{\texttt{C++}\xspace}

We implemented our algorithm in \CPP, using the Polyscope \& Geometry Central \cite{s.ea-p-19,geometrycentral} libraries. These libraries provide a halfedge data structure, with standard traversal algorithms. All timings were measured on a single thread of an Intel i7-14700K 3.4GHz CPU.  We tested our algorithm on \textit{Benchmark for 3D Mesh Segmentation Dataset} \cite{chen-meshseg}, along with additional meshes. All of our results from this dataset, along with the code, can be viewed on \href{https://neckcut.space/dist/}{meshcuts.space}. We have selected a few to feature in this paper, as shown in \tabref{inputs}. We measured the runtime of all genus zero models, as shown in \figref{timing_data}. We attempted to use the code from \cite{at-fcnf-23} for comparison. However, we were unable to reproduce a working program, despite repeated efforts.

\begin{table}[ht]

    \centering
    \begin{tabular}{|l|l|r|}
      \hline
      \textbf{Input}
      & \textbf{Description}
      & \textbf{Faces}
      \\ \hline
      \rowcolor[HTML]{D0D0D0}
      7
      & Public Domain: Human 1
      & $48,918$
      \\ \hline
      2
      & MSB/Stanford: Armadillo
      & 50,542
      \\ \hline
      \rowcolor[HTML]{D0D0D0}
      3
      & MSB: Octopus
      & 28,248
      \\ \hline
      4
      & MSB: Ant
      & 13,696
      \\ \hline
      \rowcolor[HTML]{D0D0D0}
      5
      & MSB: Horse
      & 11,072
      \\ \hline
      6
      & MSB: Hand
      & 3,026
      \\ \hline
      \rowcolor[HTML]{D0D0D0}
      1
      & MSB: Human 2
      & 11,258
      \\ \hline
      8
      & Stanford: Bunny
      & $69,630$
      \\ \hline
    \end{tabular}
    \caption{Selected inputs from our testing. MSB models are from \cite{chen-meshseg}, Stanford models are from \cite{s-s3sr-14}}
    \tablab{inputs}
\end{table}

Our implementation still has some optimizations to be yet implemented, as described in \secref{runtime}. The timings described in \tabref{timing} are single-threaded operations. We lazily compute the bounded area in this implementation, only running the prefix-sum method per path, rather than the entire skeleton at once. However, further optimizations would not have a significant impact on the runtime. The results from these models can be seen in \figref{human_arm}, \figref{octopus}, and \figref{hands_it}.

When deciding which cycles to display, we chose cycles whose tightness is a local maximum among a window of five cycles. From the discussion in \secref{opt_approx}, a neck is defined by two collars, with some optimal collar lying within the neck. In practice, we observe that several cycles all reach a maximum tightness as a group, since these neck-like surfaces are usually well-behaved. In dense meshes, such a small exclusion window would result in several cycles with similar tightness reported together. Alternate implementations can choose to report all or some of these cycles, but in either case, the same neck-like feature is identified.

\begin{figure}
    \centering
    \includegraphics[width=0.65\linewidth]{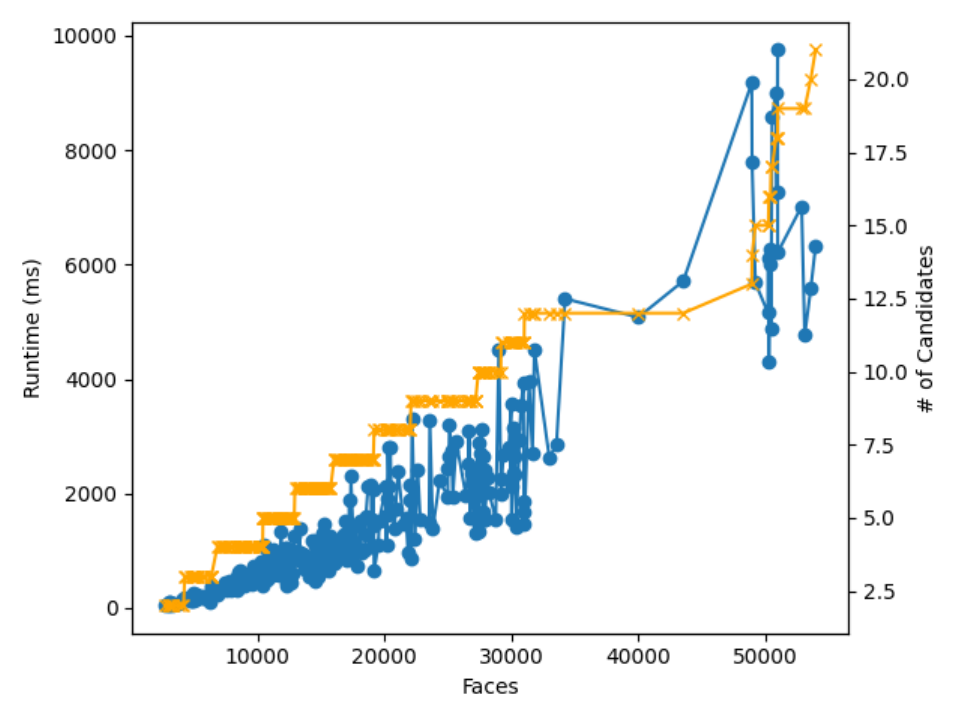}
    \caption{Runtime of models from \cite{chen-meshseg}. Runtime is plotted with \textit{blue} dots. The number of candidates is plotted with \textit{orange} crosses.}
    \figlab{timing_data}
\end{figure}

\begin{figure}[ht]
    \centering
    \includegraphics[width=0.9\linewidth]{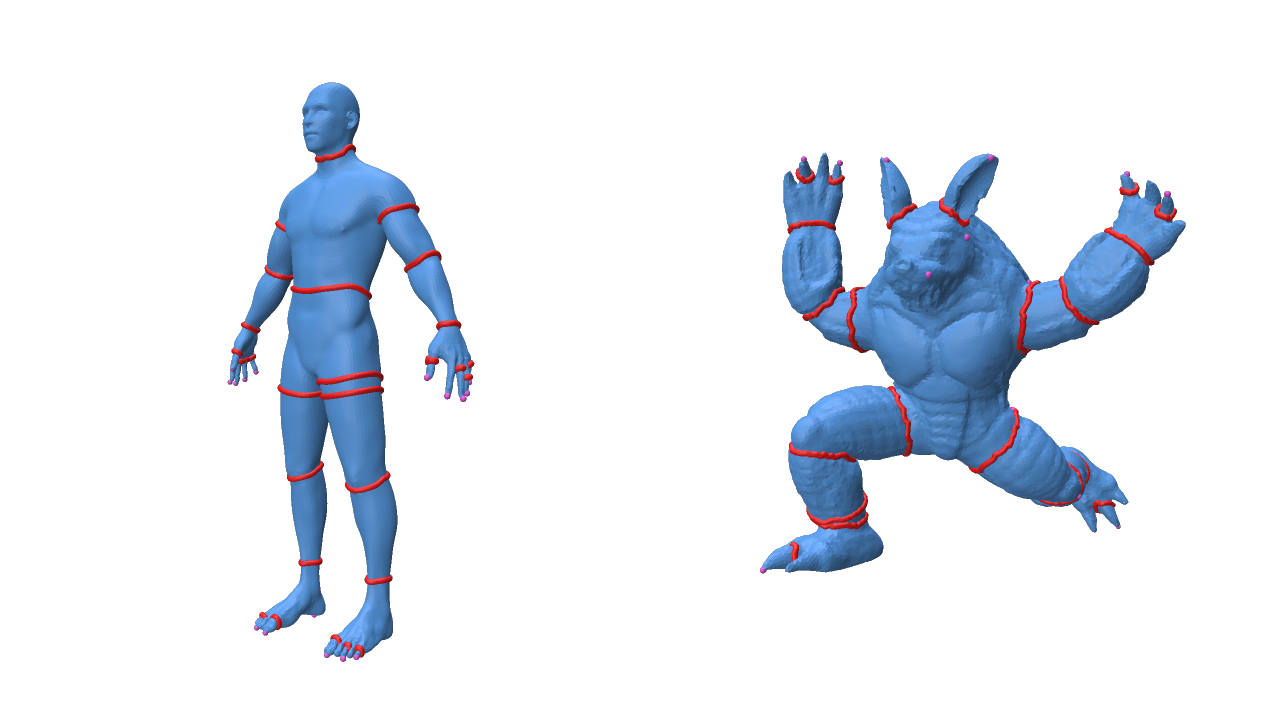}
    \caption{Inputs 1 \& 2}
    \figlab{human_arm}
\end{figure}

\begin{figure}[ht]
    \centering
    \includegraphics[width=0.9\linewidth]{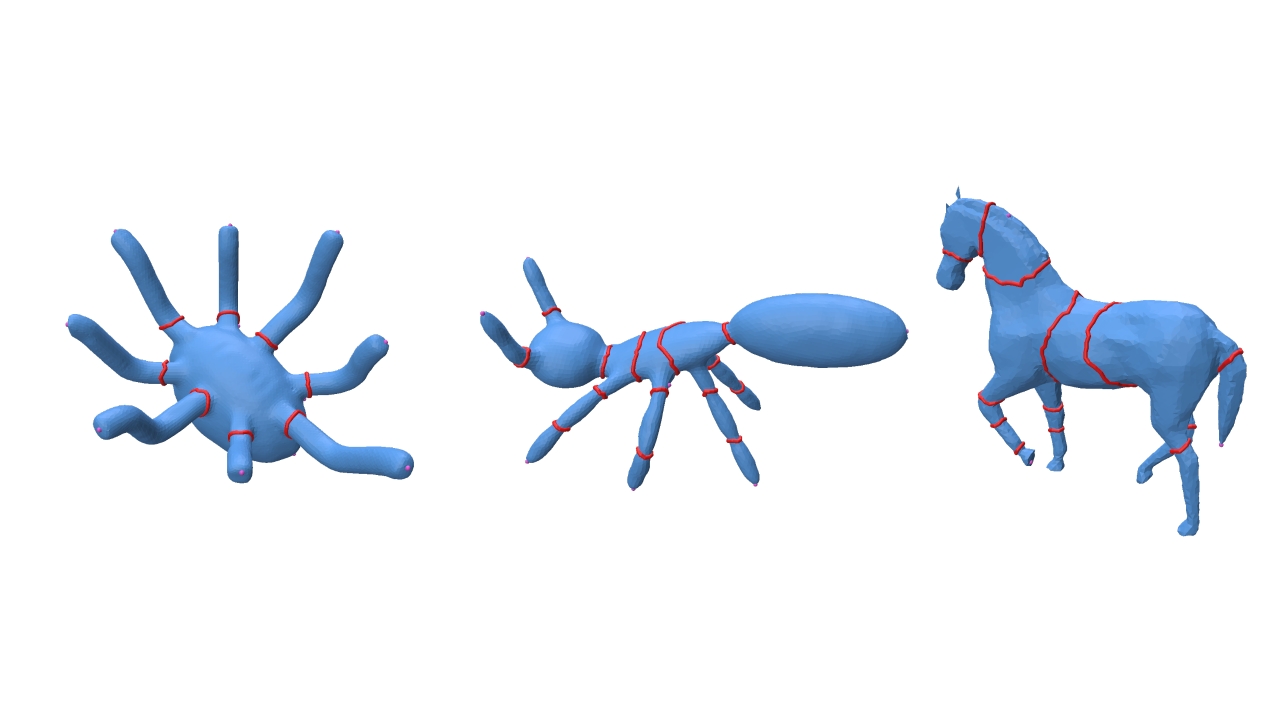}
    \caption{Inputs 3, 4, 5}
    \figlab{octopus}
\end{figure}

\begin{figure}[ht]
    \centering
    \includegraphics[width=0.9\linewidth]{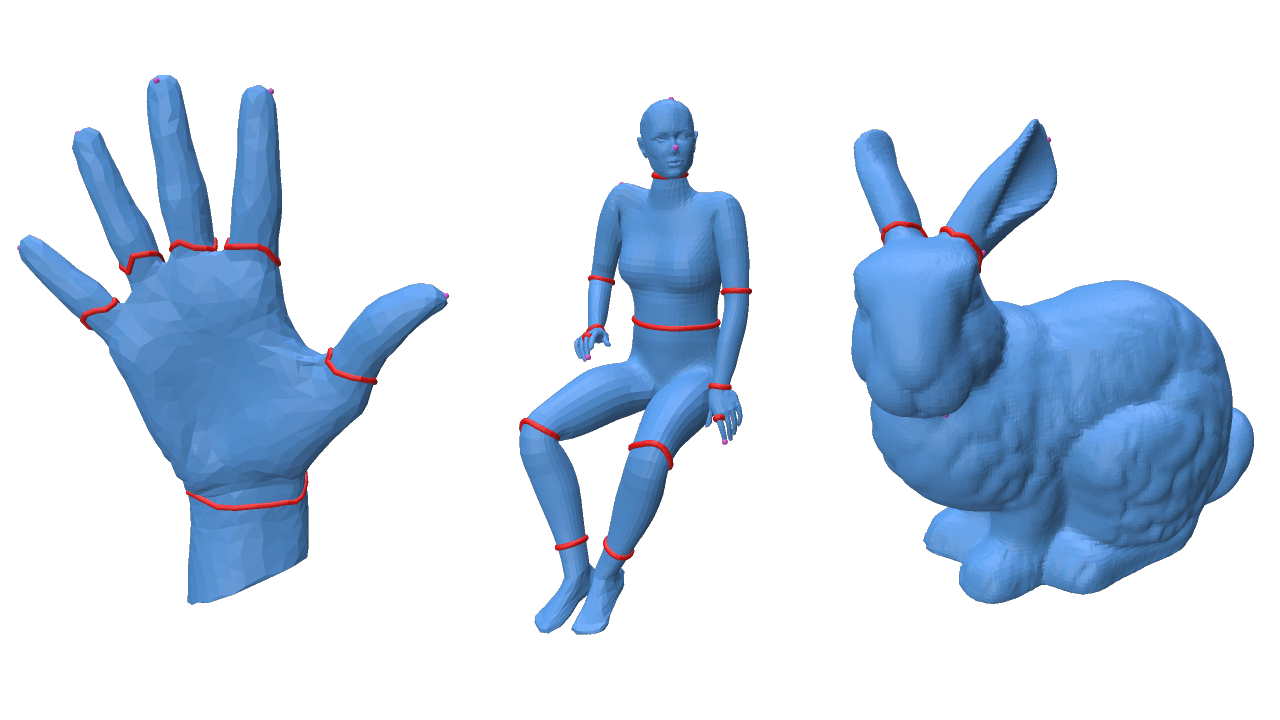}
    \caption{Inputs 6, 7, 8}
    \figlab{hands_it}
\end{figure}

\begin{table}[ht]
    \centering
    \begin{tabular}{|c|r||r|r|r|r|}
      \cline{3-6}
        \multicolumn{2}{c||}{}
      &
        \multicolumn{4}{c|}{\textbf{Runtime in ms}}
      \\ \hline
      {Input}
      & {\#}
      & Salient
      & Cycles
      & Tightness
      & {Total}
      \\
      \hline
      \hline
      1
      & 21
      & $182$
      & $4,252$
      & {398}
      & $4,833$
      \\
      \hline
      {2}
      & 22
      & 185
      & $5,428$
      & {413}
      & $6,026$
      \\
      \hline
      {3}
      & 12
      & {89}
      & $1,352$
      & $129$
      & $1,570$
      \\
      \hline
      {4}
      & 11
      & {43}
      & {578}
      & {57}
      & 678
      \\
      \hline
      {5}
      & 9
      & {35}
      & {516}
      & {37}
      & 588
      \\
      \hline
      {6}
      & 6
      & {11}
      & {98}
      & {6}
      & 115
      \\
      \hline
      {7}
      & 7
      & {42}
      & {496}
      & {28}
      & 567
      \\
      \hline
      {8}
      & 9
      & {226}
      & $8,151$
      & {230}
      & $8,607$
      \\\hline
    \end{tabular}
    \caption{Runtimes of the algorithm on selected inputs. The runtime is sensitive to the number of discovered points, which is also disclosed. Here, $r=20$ was used for salient point filtering. The column $\#$ is the number of salient points computed. the \emph{Salient} column is the time to discover the set of salient points, and connect them into a skeleton. The \emph{Cycles} column is the time for discovering every cycle along the skeleton. The \emph{Tightness} column is the time for area computation, tightness computation, and cycle filtering.}
    \tablab{timing}
\end{table}

\section{Conclusions}
\seclab{discuss}

We proposed a new definition for neck-like features and the curves that defined them. We also presented an approximation and practical algorithm to detect these bottleneck curves on real-world meshes. We believe our method has improvements over previous work \cite{tvd-tdmse-06, h-c-05, at-fcnf-23}, while also being extremely simple to implement. For future work, we plan to explore using this algorithm in other applications. One possible application is to use this algorithm as the seeds for \cite{xhf-ecegl-12}, rather than user-defined cutting planes, to discover exact geodesics from our output cycles.

\FloatBarrier

\BibTexMode{%
}%
\BibLatexMode{\printbibliography}

\end{document}